\font\black=cmbx10 \font\sblack=cmbx7 \font\ssblack=cmbx5 \font\blackital=cmmib10  \skewchar\blackital='177
\font\sblackital=cmmib7 \skewchar\sblackital='177 \font\ssblackital=cmmib5 \skewchar\ssblackital='177
\font\sanss=cmss11 \font\ssanss=cmss8 
\font\sssanss=cmss8 scaled 600 \font\blackboard=msbm10 \font\sblackboard=msbm7 \font\ssblackboard=msbm5
\font\caligr=eusm10 \font\scaligr=eusm7 \font\sscaligr=eusm5  \font\fraktur=eufm10
\font\sfraktur=eufm7 \font\ssfraktur=eufm5
\font\bsymb=cmsy10 scaled\magstep2
\def\all#1{\setbox0=\hbox{\lower1.5pt\hbox{\bsymb
			\char"38}}\setbox1=\hbox{$_{#1}$} \box0\lower2pt\box1\;}
\def\exi#1{\setbox0=\hbox{\lower1.5pt\hbox{\bsymb \char"39}}
	\setbox1=\hbox{$_{#1}$} \box0\lower2pt\box1\;}
\def\tx#1{{\fam0\relax#1}}
\def\sss#1{{\fam\ssfam\relax#1}}
\def\hpb#1{\setbox0=\hbox{${#1}$}
	\copy0 \kern-\wd0 \kern.2pt \box0}
\def\vpb#1{\setbox0=\hbox{${#1}$}
	\copy0 \kern-\wd0 \raise.08pt \box0}
\def\pmb#1{\setbox0\hbox{${#1}$} \copy0 \kern-\wd0 \kern.2pt \box0}
\def\pmbb#1{\setbox0\hbox{${#1}$} \copy0 \kern-\wd0
	\kern.2pt \copy0 \kern-\wd0 \kern.2pt \box0}
\def\pmbbb#1{\setbox0\hbox{${#1}$} \copy0 \kern-\wd0
	\kern.2pt \copy0 \kern-\wd0 \kern.2pt
	\copy0 \kern-\wd0 \kern.2pt \box0}
\def\pmxb#1{\setbox0\hbox{${#1}$} \copy0 \kern-\wd0
	\kern.2pt \copy0 \kern-\wd0 \kern.2pt
	\copy0 \kern-\wd0 \kern.2pt \copy0 \kern-\wd0 \kern.2pt \box0}
\def\pmxbb#1{\setbox0\hbox{${#1}$} \copy0 \kern-\wd0 \kern.2pt
	\copy0 \kern-\wd0 \kern.2pt
	\copy0 \kern-\wd0 \kern.2pt \copy0 \kern-\wd0 \kern.2pt
	\copy0 \kern-\wd0 \kern.2pt \box0}
\mathchardef\za="710B  
\mathchardef\zb="710C  
\mathchardef\zg="710D  
\mathchardef\zd="710E  
\mathchardef\zve="710F 
\mathchardef\zz="7110  
\mathchardef\zh="7111  
\mathchardef\zvy="7112 
\mathchardef\zi="7113  
\mathchardef\zk="7114  
\mathchardef\zl="7115  
\mathchardef\zm="7116  
\mathchardef\zn="7117  
\mathchardef\zx="7118  
\mathchardef\zp="7119  
\mathchardef\zr="711A  
\mathchardef\zs="711B  
\mathchardef\zt="711C  
\mathchardef\zu="711D  
\mathchardef\zvf="711E 
\mathchardef\zq="711F  
\mathchardef\zc="7120  
\mathchardef\zw="7121  
\mathchardef\ze="7122  
\mathchardef\zy="7123  
\mathchardef\zf="7124  
\mathchardef\zvr="7125 
\mathchardef\zvs="7126 
\mathchardef\zf="7127  
\mathchardef\zG="7000  
\mathchardef\zD="7001  
\mathchardef\zY="7002  
\mathchardef\zL="7003  
\mathchardef\zX="7004  
\mathchardef\zP="7005  
\mathchardef\zS="7006  
\mathchardef\zU="7007  
\mathchardef\zF="7008  
\mathchardef\zW="700A  
\newcommand{\be}{\begin{equation}}
\newcommand{\ee}{\end{equation}}
\newcommand{\bea}{\begin{eqnarray}}
\newcommand{\eea}{\end{eqnarray}}
\newcommand{\beas}{\begin{eqnarray*}}
	\newcommand{\eeas}{\end{eqnarray*}}
\newcommand{\Z}{{\mathbb Z}}
\def\*{{\textstyle *}}
\newcommand{\R}{{\mathbb R}}
\newcommand{\Oc}{{\mathbb O}}
\newcommand{\we}{\wedge}
\newcommand{\pa}{\partial}
\newcommand{\ti}{\times}
\def\o{\mathbf{o}}
\def\Sec{\operatorname{Sec}}
\def\sT{{\sss T}}
\def\xi{\tx{i}}
\def\Id{\operatorname{Id}}
\newcommand{\nm}[1]{\ensuremath{\Vert #1 \Vert}}
\newtheorem{theorem}{Theorem}
\newtheorem{proposition}{Proposition}
\newtheorem{lemma}{Lemma}
\theoremstyle{definition}
\newtheorem{example}{Example}
\newtheorem{definition}{Definition}
\newcommand{\lvec}[1]{\overleftarrow{#1}}
\newcommand{\rvec}[1]{\overrightarrow{#1}}
\newcommand{\e}{\mathrm{e}}
\begin{document}
	\title{\bf Discrete mechanics on unitary octonions}
	\date{}
	\author{\\ Janusz Grabowski\\ Zohreh Ravanpak
		\\ \\
		{\it Institute of Mathematics}\\
		{\it Polish Academy of Sciences}
	}
	\maketitle
	
	\begin{abstract}
		In this article we generalize the discrete Lagrangian and Hamiltonian mechanics on Lie groups to non-associative objects generalizing Lie groups (smooth loops). This shows that the associativity assumption is not crucial for mechanics and opens new perspectives. As a working example we obtain the discrete Lagrangian and Hamiltonian mechanics on unitary octonions.
	\end{abstract}

\noindent	
\textbf{Keywords:} non-associative geometry; discrete Euler-Lagrange equation; Lie groupoid; smooth loop; octonions

\noindent
\textbf{AMSC2010:} 17B66, 17Dxx, 20N05, 22A22, 70G45, 70Hxx

	\section{Introduction}

The main tool in the theory of geometric integrators (see e.g. \cite{MW}) is the discrete Lagrangian and Hamiltonian formalism on $G=M\ti M$, i.e. the groupoid of pairs of points of a manifold $M$. This can be generalized to arbitrary Lie groupoid, in particular any Lie group. This formalism has been studied by Weinstein \cite{weinstein} (see also \cite{MMM}).	

	In \cite{MV}, Moser and Veselov  consider the Lagrangian and Hamiltonian formalisms for discrete mechanics on a Lie group. The Lagrangian function
$L$ is defined on a Lie group $G$, and the dynamical system is given by a diffeomorphism from $G$ to itself. The corresponding Hamiltonian system is the mapping from the dual Lie algebra $\mathfrak g^* $ to itself for which $L$ is the generating function.

Lie groupoids have been recently used for a geometric formulation of the Lagrangian formalism and information geometry in many papers \cite{FZ, GGKM,IMMM,IMMP,MMM,MMS,Stern,weinstein}. Infinitesimal parts of Lie groupoids are Lie algebroids and mechanics on Lie algebroids has been also extensively studied \cite{IMMS, LMM, Li, Martinez, weinstein}. More general algebroids in which the Jacobi identity is not satisfied (\cite{GU1,GU2}) have been used in this context as well \cite{GG,GG1,GGU,GJ,GLMM}. The idea is based on the concept of a \emph{Tulczyjew triple} \cite{Tul1,Tul2}. For general theory of Lie groupoid and Lie algebroids we refer to Mackenzie \cite{Ma}.

Let $G\rightrightarrows M$ be a Lie groupoid, $\alpha, \beta : G \to M$ being its source and target maps, with a multiplication map $m : G^{(2)} \to G$, where $G^{(2)} = \{(g,h) \in G\times G| \; \beta(g) = \alpha(h)\}$. Denote its corresponding Lie algebroid by $AG$ represented by the normal bundle $\zn(M)=\sT G_{|M}/\sT M$ to the submanifold of units $M\subset G$. Sections of $AG$ are represented by the left-invariant $\lvec X$ (or right-invariant $\rvec X$) vector fields on $G$ associated with $X\in\Sec(AG)$.

Lagrangian mechanics on a groupoid $G$ for a smooth, real-valued function $L$ on $G$ is defined as follows \cite{weinstein}.

Let $L^{(2)}$ be the restriction to the set of composable pairs $G^{(2)}$ of the function $(g,h) \to L(g) + L(h)$ and $\Sigma_L \subset G^{(2)}$ be the set of critical points of $L_{(2)}$ along the fibers of the multiplication map $m$; that is, the points in $\Sigma_L$ are stationary points of the function $L(g)+L(h)$ when $g$ and $h$ are restricted to admissible pairs with the constraint that the product $gh$ is fixed. Variations of the constraint are of the form $(gu,u^{-1}h)\in G^{(2)}$.

A solution of the Hamilton principle for the Lagrangian function $L$ is a sequence $...,g_2,g_1,g_0,g_1,g_2,...$ of elements of $G$, defined on some “interval” in $\mathbb Z$, such that $(g_i,g_{i+1}) \in \Sigma_L$ for each $i$. The Hamiltonian formalism for discrete Lagrangian systems is based on the fact that each Lagrangian submanifold of a symplectic groupoid (see \cite{We}) determines a Poisson automorphism on the base Poisson manifold. Recall that the cotangent bundle $\sT ^*G$ is, in addition to being a symplectic manifold, a Lie groupoid itself, the base being $A^*G$; notice that both manifolds are naturally Poisson. The source and target mappings $\tilde\alpha , \tilde\beta : \sT^* G \to A^*G$ are Poisson maps induced by $\alpha$ and $\beta$. In \cite{MMS}, the authors showed that Lagrangian submanifolds of symplectic groupoids give rise to discrete dynamical system.

Discrete Euler-Lagrange equations on Lie groupoids can be derived from the variational principles. In \cite{MMM},  the discrete Euler-Lagrange equations take the form
\be\label{e1}
\lvec X(g_i)(L)-\rvec X(g_{i+1}) (L)=0
\ee
on a Lie groupoid $G\rightrightarrows M$, for every section $X$ of $AG$. Note that $(g_i,g_{i+1})\in G^{(2)}$ and the left and right arrow denotes the right and left-invariant vector field on $G$ associated with $X\in \Sec(AG)$ understood as a section on the normal bundle.

A development of discrete Lagrangian mechanics on a Lie groups and groupoids has been developed in many papers (e.g. \cite{FZ,IMMM,IMMP,MMM, MP,MR, Stern,weinstein}).
Nevertheless, the generalization of the discrete mechanics to non-associative objects
is still lacking, and the aim of this paper is to fill this gap by presenting a systematic approach
for the construction of discrete Lagrangian and Hamiltonian formalism on smooth loops \cite{Br1}.

The theory of smooth quasi-groups and loops has already
	started to find interesting applications in geometry and physics.
	The remarkable development of smooth quasigroups and loops theory since the pioneering
	works of Mal'cev in 1955 (see \cite{mal}) was presented by Lev V. Sabinin \cite{Sab}, where the large bibliography on the subject
	is given. We refer also to the books \cite{Bel,Pfl} and the survey articles \cite{Sab1,Smi} if terms and
	concepts from non-associative algebra are concerned.

As a working example we will develop the discrete Lagrangian and Hamiltonian formalism on unitary octonions $\Oc_1$ (understood as an inverse loop in the algebra of octonions $\Oc$ or a subloop in the loop $\Oc^\ti$ of invertible octonions) which as a manifold is the seven-sphere.

It is well known that $S^0$, $S^1$, $S^3$ and $S^7$ are only spheres which are parallelizable and they correspond to elements of unit norm in the normed division algebras of the real numbers, complex numbers, quaternions, and octonions. The first three spheres are Lie groups ($S^0=\Z_2$, $S^1=U(1)$, $S^3=SU(2)$), but $S^7$ is the only parallelizable sphere which is not a Lie group since it is not associative.	

The left and right translations in the loop $\Oc_1$ act as diffeomorphism, so the left and right prolongations $\lvec X$ and $\rvec X$  of $X$ in the tangent space $\o_1=\sT_{e_0}\Oc_1=A\Oc_1$ at the neutral element $e_0\in\Oc_1$ to the loop $\Oc_1$ are well defined vector fields and the Euler-Lagrange equation (\ref{e1}) makes sense also in this case, although the variational approach is not applicable. Note that because of the lack of associativity the vector fields $\lvec X$ and $\rvec X$ are no longer left  nor right invariant, so that the tangent algebra $\o_1$ is not a Lie algebra. It is enough to observe that the other concepts like the Legendre map, Hamiltonian map, etc. are in the case of a Lie group built on prolongations $\lvec X$ and $\rvec X$, so versions of  Lagrangian and Hamiltonian formalisms can be formulated also for smooth loops. To develop such versions is the main aim of this paper.

	\section{Smooth loops}
Let us recall that a \emph{loop} is an algebraic structure $<G,\cdot, e>$ with a binary operation (written usually as juxtaposition, $a\cdot b=ab$) such
	that
	$r_a: x \mapsto xa$ (the \emph{right translation}) and
	$l_a: x \mapsto ax$ (the \emph{left translation})
	are permutations of $G$, equivalently, in which the equations
	$ya = b$ and $ax = b$ are uniquely solvable for $x$ and $y$ respectively, with a two-sided identity element,
	$e$, $e x=x e=x$. A loop $<G,\cdot , e>$ with identity $e$ is called an \emph{inverse
		loop} if it is equipped with a smooth inversion map $\iota : G \to G$ which we denote simply by $\iota (a) = a^{-1}$. In other words,
 to each element $a$ in G there corresponds an element $a^{-1}$ in $G$ such that
	$$a^{-1}(a b) = (b a) a^{-1} =b$$ for all $b\in G$.
	It can be then easily shown that in an inverse loop $<G, \cdot, {}^{-1}, e>$ we have,
	for all $a, b \in G$,
	\be\label{inv}aa^{-1} = a^{-1}a = e,\quad (a^{-1})^{-1} = a,\quad \text{and}\quad  (ab)^{-1} = b^{-1} a^{-1}\,.\ee
The above identities imply that
\be\label{inv1} \iota(e)=e\,,\quad\iota^2=\Id_G,\quad \text{and}\quad \zi\circ l_a=r_{a^{-1}}\circ\zi\,.
\ee
	Loops having only one inverse are called \emph{left inverse loops} (resp. \emph{right inverse loops}). Left inverse loops, appear naturally as algebraic structures
	on \emph{transversals} or \emph{sections} of a subgroup in a group. In this case the homogeneous structures are equipped with a binary operation. This observation, going
	back to R. Baer \cite{Bae} (cf. also \cite{Fo,KW}), lies at the heart of much current research on loops, also in differential geometry and analysis.

A \emph{smooth loop} $ G$ is a smooth manifold equipped with a smooth multiplication, $m :G \times  G \to  G$, $(g, h) \longmapsto m(g, h) = gh$, such that the left and right translations are diffeomorphisms, together with an identity element $e\in G$ such that $eg=ge=g$ for every $g\in G$. Let $\mathfrak g=\sT_eG$. If $G$ is a smooth loop with the smooth inverse $\zi:G\to G$ (smooth inverse loop), then
\be\label{inv12}\zi_*(X)=-X\quad \text{for}\quad X\in\mathfrak g\,.
\ee
 Indeed, if $\zg:\R\to G$ is a curve in $G$ such that $\zg(0)=e$ and $\zg$ represents $X\in\mathfrak g$, then
$$\frac{d}{dt}|_{t=0}\left(\zg(t)(\zg(t))^{-1}\right)=X+\zi_*(X)=0\,.$$

Let $X_e\in\mathfrak g$ be a vector field in $\sT_eG$. We can left-translate (resp. right-translate) the value of $X_e$ by the tangent of the left (resp. right) translation by $g$. However, although they are not invariant vector fields anymore due to the lack of associativity, we still are able to define the left (resp. right) \emph{prolongations} of $X_e$  to vector fields $\lvec X$ (resp. $\rvec X$) on $G$ using the tangent maps at $g \in G$ to the left translation $l_g$ and right translation $r_g$:
	\[
	\lvec X_g=D_e(l_g)( X_e),\quad  \rvec X_g=D_e(r_g) (X_e).
	\]
Here, $D$ denotes the derivative. This prolongations are smooth vector fields, because for a smooth function $f$ defined on $ G$ and a smooth curve $\gamma$ such that $\gamma(0)=e$ and $\frac{d}{dt}|_{t=0}\gamma(t)=X_e$,  we get the following smooth function
		\[
	\begin{array}{rcl}
	\lvec Xf(g)&=&D_e(l_g) (X_e)f=(X_e)(f\circ l_g)=\frac{d}{dt}|_{t=0}(f\circ l_g \circ \gamma(t))\\[6pt]
	&=& \frac{d}{dt}|_{t=0}(f(g\gamma(t)))=\frac{d}{dt}|_{t=0}(f\circ m(g,\gamma(t)))\\
	\end{array}
	\]
and similarly for $\rvec Xf(g)$. According to (\ref{inv1}) and (\ref{inv12}), we have
\be\label{inv2}
\zi_*(\lvec X)=-\rvec X\circ\zi\,,
\ee
but due to non-associativity we cannot infer that there is $[X,Y]\in\mathfrak g$ such that $[\lvec X,\lvec Y]=\lvec{[X,Y]}$ nor $[\rvec X,\rvec Y]=\rvec{[X,Y]}$. Moreover, in general, we do not have $[\lvec X,\rvec Y]=0$ nor $[\rvec X,\rvec Y]=-[\lvec X,\lvec Y]$.

However, the tangent space at the identity $\sT_e G\cong\mathfrak g$ inherits a skew-symmetric bilinear product $[ \cdot,\cdot ]_l$ from the Lie product of the left prolongations of vector fields over the loop. In other words, $[X,Y]_l=[\lvec X,\lvec Y]_e$. This is indeed a bilinear product, since for $a\in\R$ we have $\lvec{aX}=a\lvec X$. The Jacobi identity does not hold due to the non-associativity. So, $\mathfrak g$ it is not a Lie algebra but a  \emph{skew-algebra}, that is, a vector space equipped with a skew-symmetric binary operation. A similar bracket $[ \cdot,\cdot ]_r$ we obtain from the right prolongations, but in general they do not differ only by sign.
The skew-algebra structure on $\mathfrak g$ corresponds to a linear \emph{Leibniz structure} on $\mathfrak g^*$, i.e. a linear bivector field, exactly like a Lie algebra structure corresponds to a linear Poisson structure on the dual space.

In the paper \cite{mal} a local \emph{diassociative} analytic loop $G$ was considered.
Diassociativity means that any two elements generate a genuine subgroup.
Since the multiplication in a loop is a binary operation and the loop is diassociative,
one may write the analogue of the Campbell-Hausdorff series, which depends
only on one skew-symmetric bilinear operation $[\cdot,\cdot]$ in the tangent
space $\mathfrak g=\sT_eG$. In our notation this bracket coincides with $[\cdot,\cdot]_l$. This algebra is a
\emph{binary-Lie algebra}, that is, any two of its elements generate a subalgebra which
is a Lie algebra. Any binary-Lie algebra generates, by means of the Campbell-
Hausdorff formula, a diassociative local loop. Thus we get a diassociative smooth
loops -- binary-Lie algebras theory generalizing the Lie groups -- Lie algebras
theory.

In the same paper of Mal’cev smooth Moufang loops were also considered, as particular cases of diassociative loops.		
	A loop is called a \emph {Moufang loop} if it satisfies any of the three following equivalent conditions \cite{JM}
	\[
	((ax)a)y = a(x(ay)),\quad	((xa)y)a = x(a(ya)), \quad(ax)(ya)= (a(xy))a.
	\]
	The tangent algebra of a smooth Moufang loop is a Mal'cev algebra which is a binary-Lie algebra satisfying \cite{nagy}
	\[
	\left[ [X,Y],[X,Z]\right] =\left[ \left[ [X,Y],Z\right] ,X\right] +\left[ \left[ [Y,Z],X\right] ,X\right]+\left[ \left[ [Z,X],X\right] ,Y\right],
	\]	
	for every $X,Y,Z$. There is  again a sort of Lie's Third Theorem for smooth Moufang loops and Mal'cev algebras \cite{kuz,mal,nagy}.

The following is well known.
\begin{theorem}\label{t1}
Invertible octonions $\Oc^\ti$ form a smooth inverse Moufang loop under the octonion multiplication..
\end{theorem}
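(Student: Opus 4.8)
The plan is to realize $\Oc$ as a composition algebra and read off every required property from that structure. First I would recall that $\Oc$ carries the conjugation $x\mapsto\ol x$ and the norm $N(x)=x\ol x=\ol x\,x\in\R$, which is positive-definite and multiplicative, $N(xy)=N(x)N(y)$. Multiplicativity forces the absence of zero divisors: if $xy=0$ then $N(x)N(y)=0$, so $x=0$ or $y=0$. Hence $\Oc^\ti=\Oc\setminus\{0\}$, and for $a\ne 0$ each element has the two-sided inverse $a^{-1}=\ol a/N(a)$. Smoothness is then immediate: the multiplication $m$ is $\R$-bilinear, hence polynomial and smooth, while the inversion $a\mapsto\ol a/N(a)$ is smooth on $\Oc^\ti$ because there $N$ is a nonvanishing polynomial and conjugation is linear.

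Next I would check the loop axioms. For $a\in\Oc^\ti$ the left translation $l_a\colon x\mapsto ax$ is $\R$-linear on the eight-dimensional space $\Oc$ and injective by the absence of zero divisors, hence a linear bijection of $\Oc$; since $l_a(x)=0$ exactly when $x=0$, it restricts to a bijection of $\Oc^\ti$, and the equation $ax=b$ is uniquely solved by $x=a^{-1}b$. The identical argument applies to $r_a$. Together with the octonion unit $1$, which is a two-sided identity, this already exhibits $\Oc^\ti$ as a smooth loop.

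The two remaining properties both rest on \emph{alternativity}: the associator $[x,y,z]=(xy)z-x(yz)$ is alternating on $\Oc$, a fact that follows from the Cayley--Dickson construction (equivalently, by linearizing the identities $x(xy)=x^2 y$ and $(yx)x=yx^2$). By Artin's theorem, any two elements of $\Oc$ generate an associative subalgebra. For the inverse-loop identity I would observe that $\ol x=2\,\re(x)-x$ lies in the subalgebra generated by $x$, so $x^{-1}$, $x$ and $y$ all belong to the associative subalgebra generated by $x$ and $y$; consequently $x^{-1}(xy)=(x^{-1}x)y=y$ and $(yx)x^{-1}=y(xx^{-1})=y$, which is precisely the inverse-loop axiom of the previous section. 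For the Moufang property, the three displayed identities are polynomial identities valid in every alternative algebra, so in particular they hold throughout $\Oc$ and hence on $\Oc^\ti$; this makes the loop Moufang.

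The only place where genuine computation hides is the establishment of alternativity and of the Moufang identities; everything else is formal bookkeeping. For the octonions these facts are classical, so the hard part is really just this algebraic core, which I would either cite or dispatch by the short linearization argument for alternativity and then obtain the Moufang identities as its standard consequence, rather than grinding through the eight-dimensional multiplication table.
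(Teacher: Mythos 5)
Your proof is correct and complete; note, however, that the paper itself gives no proof of this theorem at all --- it is introduced with ``the following is well known'' --- and only later, in Section 4, recalls the relevant classical facts: the multiplicative norm, conjugation, the inverse $g^{-1}=g^*/\|g\|^2$, the genuine two-sided inverse property $g(g^{-1}h)=g^{-1}(gh)=h$, and diassociativity/alternativity. Your argument assembles exactly these ingredients into an actual proof, which is the standard composition-algebra route: multiplicativity of the norm rules out zero divisors, so $\Oc^\ti=\Oc\setminus\{0\}$ is open in $\R^8$, translations are injective linear maps and hence linear bijections (which already gives the loop axioms and their smoothness, inversion being rational with nonvanishing denominator); Artin's theorem yields the inverse-loop identity because $x^{-1}$ lies in the associative subalgebra generated by $x$ and $y$; and the Moufang identities hold in any alternative algebra. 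One small ordering remark: when you say the equation $ax=b$ is ``uniquely solved by $x=a^{-1}b$'', that formula already invokes the inverse property you only establish afterwards via Artin's theorem; this is harmless, since unique solvability follows from the linear bijectivity of $l_a$ alone and the explicit formula can be postponed, but in a polished write-up you would either reorder these two steps or phrase the solvability claim abstractly.
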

It is also well known that the tangent and cotangent bundles of a Lie group are Lie groupoids themselves, the tangent bundle $\sT G$ is a Lie group with the unit $(e,0)$ and the cotangent bundle $\sT^*G$ is a smooth groupoid over $\mathfrak g^*=AG$ \cite{Ma,We}.
The multiplication relation in the first case is the tangent bundle $\sT m\subset \sT G\ti\sT G\ti\sT G$ of the multiplication relation $m\subset G\ti G\ti G$ in $G$ and for $\sT^* G$ it is the annihilator  $(\sT m)^0\subset\sT^* G\ti\sT^*G\ti\sT^*G$, where the pairing with the third $\sT^*G$  we take with the minus sign. This can be easily extended to smooth loops.
\begin{theorem}\label{t2}
	The tangent bundle $\sT G$ of a smooth loop $G$ is a smooth loop under the multiplication
\[
D_{(g,h)}m(X_g,Y_h)=D_g(r_h)(X_g)+D_h(l_g)(Y_h),
\]
for $X_g\in \sT_gG$ and $Y_h\in \sT_hG$.
\end{theorem}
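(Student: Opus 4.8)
The plan is to recognise the stated product as the tangent lift $\sT m$ of the multiplication and then invoke functoriality of the tangent bundle. First I would verify that the given formula is exactly $D_{(g,h)}m(X_g,Y_h)$: writing $(X_g,Y_h)=(X_g,0_h)+(0_g,Y_h)$ in $\sT_{(g,h)}(G\ti G)$ and using linearity of the differential $D_{(g,h)}m$, the summand $D_{(g,h)}m(X_g,0_h)$ is the velocity of $t\mapsto m(c(t),h)=r_h(c(t))$ for a curve $c$ representing $X_g$, namely $D_g(r_h)(X_g)$, while $D_{(g,h)}m(0_g,Y_h)$ equals $D_h(l_g)(Y_h)$ by the symmetric computation. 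Thus the prescribed multiplication coincides with $\sT m$ under the canonical identification $\sT(G\ti G)\cong\sT G\ti\sT G$.

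The main idea is that a smooth loop is an algebraic object specified entirely by smooth maps and \emph{equational} identities, which any product-preserving functor must preserve. Besides $m$ and the unit $e$, a smooth loop carries smooth \emph{left} and \emph{right division} maps $\zl,\zr:G\ti G\to G$, $\zl(a,b)=l_a^{-1}(b)$ and $\zr(a,b)=r_a^{-1}(b)$, subject to $m(a,\zl(a,b))=b$, $\zl(a,m(a,b))=b$, $m(\zr(a,b),a)=b$, $\zr(m(b,a),a)=b$, together with $m(e,a)=m(a,e)=a$. These are all commutative diagrams of smooth maps.

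The one genuine point to check is that $\zl$ and $\zr$ really are smooth. Since $l_a$ and $r_a$ are diffeomorphisms, the differential of $\Phi:G\ti G\to G\ti G$, $\Phi(a,b)=(a,ab)$, is everywhere invertible: given $(X_a,Z)$, the equation $D_{(a,b)}\Phi(X_a,Y_b)=(X_a,D_a(r_b)(X_a)+D_b(l_a)(Y_b))=(X_a,Z)$ has a unique solution $Y_b$ because $D_b(l_a)$ is an isomorphism. Hence $\Phi$ is a bijective local diffeomorphism, so a global one, and its inverse $(a,c)\mapsto(a,\zl(a,c))$ is smooth; the analogous map $(a,b)\mapsto(ab,b)$ yields smoothness of $\zr$. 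Thus a smooth loop is exactly a manifold equipped with smooth $m,\zl,\zr$ and a point $e$ satisfying the identities above, i.e. an equationally defined smooth structure.

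Finally I would apply the tangent functor, which preserves finite products and carries commutative diagrams to commutative diagrams. This produces smooth maps $\sT m,\sT\zl,\sT\zr:\sT G\ti\sT G\to\sT G$ and the point $\sT e=0_e\in\sT_eG$, and transports every defining identity to $\sT G$. In particular $0_e$ becomes a two-sided unit, while $\sT m(\zx,\sT\zl(\zx,\zh))=\zh$ and $\sT m(\sT\zr(\zx,\zh),\zx)=\zh$ (with their companions $\sT\zl(\zx,\sT m(\zx,\zh))=\zh$ and $\sT\zr(\sT m(\zh,\zx),\zx)=\zh$) show that the left translation $\zh\mapsto\sT m(\zx,\zh)$ and the right translation $\zh\mapsto\sT m(\zh,\zx)$ on $\sT G$ have smooth two-sided inverses $\zh\mapsto\sT\zl(\zx,\zh)$ and $\zh\mapsto\sT\zr(\zx,\zh)$, hence are diffeomorphisms. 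Therefore $\sT G$, with multiplication $\sT m$ given by the stated formula, is a smooth loop with unit $0_e$. The only real obstacle in the argument is the smoothness of the division maps; once that is secured the product-preserving functor $\sT$ delivers the loop axioms on $\sT G$ for free.
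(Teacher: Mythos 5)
Your proof is correct and complete, but there is nothing in the paper to compare it against line by line: the paper states this theorem with no proof at all, offering only the remark that the well-known Lie-group statement ``can be easily extended to smooth loops.'' Your write-up supplies exactly what that remark suppresses. Your opening step, identifying the stated product with the tangent map $\sT m$ under $\sT(G\ti G)\cong\sT G\ti\sT G$ via the decomposition $(X_g,Y_h)=(X_g,0_h)+(0_g,Y_h)$, matches the paper's point of view (it describes the multiplication on $\sT G$ as the tangent lift of the multiplication relation in $G$). The genuinely nontrivial content --- and the part the word ``easily'' hides --- is precisely the point you isolate: the paper's definition of a smooth loop only demands that each individual translation $l_a$, $r_a$ be a diffeomorphism, so joint smoothness of the division maps $\zl(a,b)=l_a^{-1}(b)$ and $\zr(a,b)=r_a^{-1}(b)$ is not among the hypotheses and must be proved before the loop structure can be treated as purely equational. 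Your argument settles this correctly: $\Phi(a,b)=(a,ab)$ has everywhere invertible differential by the product formula for $Dm$, is bijective because each $l_a$ is, hence is a global diffeomorphism whose smooth inverse $(a,c)\mapsto(a,\zl(a,c))$ yields smoothness of $\zl$, with the analogous map handling $\zr$. After that, functoriality of $\sT$ (preservation of products and commutative diagrams) transports the identities to $\sT G$, makes $0_e$ a two-sided unit, and exhibits $\sT\zl(\zx,\cdot)$ and $\sT\zr(\zx,\cdot)$ as smooth two-sided inverses of the left and right translations by $\zx$, so these are diffeomorphisms and $\sT G$ is a smooth loop. This categorical route is arguably preferable to the pedestrian alternative of checking unique solvability fiberwise from invertibility of $D(l_g)$ and $D(r_h)$, because it makes transparent that associativity is never used --- which is the whole point of the theorem.
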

However, as we will see later, for the cotangent bundle of a smooth loop the multiplication $(\sT m)^0\subset\sT^* G\ti\sT^*G\ti\sT^*G$ gives in general not a smooth loop.
	
	\section{Discrete mechanics on smooth loops}
\subsection{Lagrangian formalism}	
	 Let us first recall the discrete Lagrangian mechanics on Lie groups. A discrete Lagrangian system consists of a Lie group $G$ and a smooth, real-valued function $L$ on $G$. We define a function $(g,h) \to L(g) + L(h)$ of elements $g,h\in G$. A solution of the Lagrange equations for the Lagrangian function $L$ is a sequence $g_0,g_1,g_2,...$ of elements $G$ such that $(g_i,g_{i+1})$ are the stationary points of the function $L(g_i)+L(g_{i+1})$ for every $i$. 

	Discrete Lagrangian systems on Lie groups can be based on variational principles as follows.	
	The variational principle for a Lie group $G$ with Lie algebra $\mathfrak g$ is based on a set of sequences
	\[
	\mathcal C^N_g=\{(g_1,g_2,...,g_N)\in G^N\mid g_1g_2\cdots g_N=g\in G\}\,.
	\]

	Take a tangent vector at $(g_1,g_2...,g_N)$ which can be understood as the tangent vector of a curve $c(t) \in \mathcal C^N_g$ passing through $(g_1,g_2...,g_N)$ at $ t=0 $.
It is easy to see the following.
\begin{lemma}\label{l1} In a Lie group $G$ we have $g_ig_{i+1}=g'_ig'_{i+1}$ if and only if there is $h\in G$ such that $g'_i=g_ih$ and $g'_{i+1}=h^{-1}g_{i+1}$.
\end{lemma}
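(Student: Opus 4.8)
The plan is to prove both directions of the biconditional directly, using only the group axioms (associativity, existence of inverses), since Lemma~\ref{l1} is a purely algebraic statement about a Lie group $G$ and involves no smooth or differential structure beyond the group operation.

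First I would establish the easy direction (sufficiency). Suppose there is $h\in G$ with $g'_i=g_ih$ and $g'_{i+1}=h^{-1}g_{i+1}$. Then I compute
\[
g'_ig'_{i+1}=(g_ih)(h^{-1}g_{i+1})=g_i(hh^{-1})g_{i+1}=g_ig_{i+1},
\]
where associativity is used to regroup the product and $hh^{-1}=e$ collapses the middle factor. This gives $g'_ig'_{i+1}=g_ig_{i+1}$ immediately.

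For the converse (necessity), I would assume $g_ig_{i+1}=g'_ig'_{i+1}$ and construct the required element $h$. The natural candidate is $h=g_i^{-1}g'_i$, so that $g'_i=g_ih$ holds by definition. It then remains to verify that $g'_{i+1}=h^{-1}g_{i+1}$. Substituting $h^{-1}=(g_i^{-1}g'_i)^{-1}=(g'_i)^{-1}g_i$ and using the hypothesis $g'_ig'_{i+1}=g_ig_{i+1}$, I would compute
\[
h^{-1}g_{i+1}=(g'_i)^{-1}g_ig_{i+1}=(g'_i)^{-1}(g'_ig'_{i+1})=\left((g'_i)^{-1}g'_i\right)g'_{i+1}=g'_{i+1},
\]
again invoking associativity to regroup and the inverse axiom to cancel. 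This closes the proof.

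There is no genuine obstacle here: the lemma is elementary and holds in any group. The only point worth flagging is that \emph{both} associativity and the cancellation property are essential at every step, which is precisely why the lemma is stated separately for Lie groups. In the non-associative loop setting that the paper subsequently develops, this clean characterization fails, and that failure is exactly what motivates replacing the variational/reparametrization description of admissible variations by the prolongation-based approach described in the introduction. Thus the main interest of the statement is not its difficulty but its role as the associative baseline against which the loop generalization is later contrasted.
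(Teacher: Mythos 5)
Your proof is correct and is precisely the elementary group-theoretic argument the paper has in mind: the paper offers no written proof at all (it introduces the lemma with ``It is easy to see the following''), and your two directions --- collapsing $(g_ih)(h^{-1}g_{i+1})$ by associativity, and taking $h=g_i^{-1}g'_i$ for the converse --- are the obvious argument being alluded to. Nothing further is needed.
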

By Lemma \ref{l1} the curve $c(t)$ is necessarily of the form
	\begin{equation}\label{vari}
	 c(t)=(g_1\gamma_1(t),(\gamma_1(t))^{-1}g_2\gamma_2(t),...,(\gamma_{N-2}(t))^{-1}g_{N-1}\gamma_{N-1}(t),(\gamma_{N-1}(t))^{-1}g_N ),
	\end{equation}
	such that $\gamma_i:t\in(-\epsilon,\epsilon)\subseteq \mathbb R \to G$
	are the integral curves of the left invariant vector field corresponding to $X_i\in  T_e G$ that passes through the identity, that is $\gamma_i(0)=e$.	
	
	Therefore the tangent space of $\mathcal C^N_g$ at $(g_1,\dots,g_N)$ can be identified with
	\[
	\sT_{g_1,\dots,g_N}\mathcal C^N_g=\{(X_1,X_2,...,X_{N-1})\in \mathfrak g^{N-1}\mid X_i\in T_{e}G\cong \mathfrak g\}.
	\]
	The curve $c$ is called a \emph{variation} of $(g_1,g_2...,g_N)$ and $(X_1,X_2...,X_{N-1})$ is called \emph{infinitesimal variational} of $(g_1,g_2...,g_N)$. Now, we define the discrete action sum associated to the Lagrangian $L$
	\[
	\mathbb S L=\sum_{k=1}^NL(g_k).
	\]
	According to the Hamilton's principle of critical action, the sequence $(g_1,g_2...,g_N)$ is a solution of the Lagrangian system if and only if $(g_1,g_2...,g_N)$ is a critical point of $\mathbb S L$. Therefore, we calculate
	\[
	\frac{d}{dt}\mid_{t=0}\mathbb SL(c(t))=
	\sum_{k=1}^{N-1} \left[ \lvec X_k(g_k)(L)-\rvec X_k(g_{k+1}) (L)\right] =0,
	\]
	where $X_k\in\mathfrak g$. These equations are called to be \emph{discrete Euler-Lagrange equations}.

		In the category of smooth loops, because of the lack of associativity there is no variant of Lemma \ref{l1} so not clear variations like (\ref{vari}). But still we can define the discrete Euler-Lagrange equations using the smooth prolongation of vector fields as follows.
		
	\begin{definition}
		The {\emph discrete Euler-Lagrange equations} for a discrete Lagrangian system on a smooth loop $G$ with Lagrangian $L:G\to \mathbb R$ is given by equations
		\begin{equation}\label{ELE.}
		\lvec X(L)(g_i)-\rvec X (L)(g_{i+1})=0
		\end{equation}
		for every $X\in T_{\e} G$, where $\lvec X$ and $\rvec X$ are the left and right prolongation, respectively.	
	
		A sequence $g_1,g_2,...$ of elements $G$ is a solution of the Euler-Lagrange equations if $g_i$ and $g_{i+1}$ satisfy (\ref{ELE.})  for $i=1,2,\dots$ .
	\end{definition}
	Let $\gamma_L:G\to G$ be a smooth map on a smooth loop $G$ for which the couples $(g,\gamma (g))$ are solutions of Euler-Lagrange equations for $L$. The map $\gamma_L:G\to G$ is called a \emph{discrete flow} or \emph{discrete Lagrangian evolution operator}  for $L$.	
	
	We have the discrete Legendre transformation for smooth loops similar to what we have for Lie groups \cite{MP}.	
Given a Lagrangian $L:G\to \mathbb R$ on smooth loop $G$ with the skew-algebra $\mathfrak g$, two \emph{discrete Legendre transformations} $\mathbb F^+L=l^*_g\circ dL:G\to \mathfrak g^*$ and $\mathbb F^-L=r^*_g\circ dL:G\to \mathfrak g^*$, where $dL:G\to \sT ^*G$, are as follows
	\[
\mathbb F^+L(g)(X)=\lvec X (L)(g),\quad \mathbb F^-L(g)(X)=\rvec X (L)(g),
	\]
	for $X\in \mathfrak g$. Clearly, $l^*$ and $r^*$ are the pull backs of left and right translations.
Directly from the definitions we get the following
\begin{proposition}\label{p2}
$\gamma_L:G\to G$ is the discrete flow for the Lagrangian $L:G\to\R$ if and only if
\be\label{df}
\mathbb F^-L\circ\zg_L=\mathbb F^+L\,.
\ee
\end{proposition}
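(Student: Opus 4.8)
The plan is to prove both implications at once by unwinding the definitions and comparing the two maps $\mathbb F^+L$ and $\mathbb F^-L\circ\gamma_L$ as maps $G\to\mathfrak g^*$ pointwise over $G$. The key observation is that two functionals in $\mathfrak g^*$ coincide if and only if they take the same value on every $X\in\mathfrak g$, so that the equality of maps (\ref{df}) is equivalent to an equality of pairings with each $X$, which is precisely the form of the Euler-Lagrange equation (\ref{ELE.}). Thus the proposition is genuinely a reformulation of the discrete flow condition rather than a computation requiring new input.

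First I would record what it means for $\gamma_L$ to be a discrete flow: by definition the pair $(g,\gamma_L(g))$ is a solution of (\ref{ELE.}) for every $g\in G$, i.e.
\[
\lvec X(L)(g)-\rvec X(L)(\gamma_L(g))=0
\]
for all $X\in\mathfrak g$ and all $g\in G$. Next I would substitute the definitions of the two Legendre transformations, namely $\mathbb F^+L(g)(X)=\lvec X(L)(g)$ and $\mathbb F^-L(g)(X)=\rvec X(L)(g)$, and evaluate the composition at $g$:
\[
(\mathbb F^-L\circ\gamma_L)(g)(X)=\mathbb F^-L(\gamma_L(g))(X)=\rvec X(L)(\gamma_L(g)).
\]
Hence the discrete flow condition reads $\mathbb F^+L(g)(X)=(\mathbb F^-L\circ\gamma_L)(g)(X)$ for every $g\in G$ and every $X\in\mathfrak g$.

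Finally I would invoke the non-degeneracy of the dual pairing between $\mathfrak g$ and $\mathfrak g^*$: the previous identity holding for all $X\in\mathfrak g$ is exactly the equality $\mathbb F^+L(g)=(\mathbb F^-L\circ\gamma_L)(g)$ in $\mathfrak g^*$, and its holding for all $g$ then yields the equality of maps $\mathbb F^+L=\mathbb F^-L\circ\gamma_L$, which is (\ref{df}). Running the implications in reverse establishes the converse verbatim. There is no real obstacle here, since nothing beyond the definitions is used; the only point requiring a moment's care is that the lack of associativity in $G$ plays no role, because the prolongations $\lvec X$ and $\rvec X$, and hence both Legendre transformations, are defined directly through the tangent maps of $l_g$ and $r_g$ without any invariance assumption.
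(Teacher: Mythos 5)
Your proof is correct and is exactly the argument the paper intends: the paper states Proposition \ref{p2} follows ``directly from the definitions,'' and your unwinding of the discrete-flow condition, the two Legendre transformations, and the non-degeneracy of the pairing between $\mathfrak g$ and $\mathfrak g^*$ is precisely that direct verification. No gaps; the remark that non-associativity plays no role here is also accurate, since both $\mathbb F^{\pm}L$ are defined through the prolongations without any invariance requirement.
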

\begin{theorem}\label{tl}
For an inverse smooth loop $G$ the Legendre map $\mathbb F^+L$ is regular at $g$ if and only if $\mathbb F^-L$ is regular at $g^{-1}$.
Moreover, $\mathbb F^+L$ is a diffeomorphism if and only if $\mathbb F^-L$ is a diffeomorphism.
\end{theorem}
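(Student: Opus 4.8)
The plan is to deduce both assertions from a single relation between the two Legendre maps supplied by the inversion. First I would record that $\zi\colon G\to G$ is a diffeomorphism: by (\ref{inv1}) it is a smooth involution, $\zi^2=\Id_G$, so each tangent map $D_g\zi\colon\sT_gG\to\sT_{g^{-1}}G$ is a linear isomorphism with inverse $D_{g^{-1}}\zi$. I also note that the fibrewise sign map $N=-\Id_{\mathfrak g^*}$ on $\mathfrak g^*$ is a linear isomorphism, and that since $\dim G=\dim\mathfrak g^*$, ``$\mathbb F^{\pm}L$ regular at a point'' simply means that its differential there is invertible, i.e. that it is a local diffeomorphism near that point.

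The heart of the argument is to convert the prolongation identity (\ref{inv2}), $\zi_*(\lvec X)=-\rvec X\circ\zi$, into an identity of the cotangent-valued maps $\mathbb F^{\pm}L$. Using (\ref{inv2}) together with $D_e\zi=-\Id$ on $\mathfrak g$ (from (\ref{inv12})) and $(gh)^{-1}=h^{-1}g^{-1}$ (from (\ref{inv})), and the defining formulas $\mathbb F^+L(g)(X)=\lvec X(L)(g)$, $\mathbb F^-L(g)(X)=\rvec X(L)(g)$, I would establish
\[
\mathbb F^-L\circ\zi=-\,\mathbb F^+L,\qquad\text{equivalently}\qquad \mathbb F^-L(g^{-1})=-\mathbb F^+L(g)\ \ \text{for all }g\in G.
\]
This is the step I expect to be the main obstacle, precisely because the prolongations $\lvec X,\rvec X$ are \emph{not} invariant: one must keep careful track of the inversion of the base point and of the sign produced by $D_e\zi=-\Id$. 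Non-associativity causes no trouble here, since only the two-element identities (\ref{inv}) and the resulting (\ref{inv2}) enter the derivation.

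Granting this relation, both statements are immediate. For the pointwise claim, differentiating $\mathbb F^-L\circ\zi=N\circ\mathbb F^+L$ at $g$ by the chain rule gives $D_{g^{-1}}(\mathbb F^-L)\circ D_g\zi=N\circ D_g(\mathbb F^+L)$; since $D_g\zi$ and $N$ are isomorphisms, $D_{g^{-1}}(\mathbb F^-L)$ is invertible if and only if $D_g(\mathbb F^+L)$ is, that is, $\mathbb F^+L$ is regular at $g$ iff $\mathbb F^-L$ is regular at $g^{-1}$. For the global claim, the same relation exhibits $\mathbb F^-L=N\circ\mathbb F^+L\circ\zi$ as the composition of $\mathbb F^+L$ with the diffeomorphisms $\zi$ and $N$; hence $\mathbb F^-L$ is a diffeomorphism if and only if $\mathbb F^+L$ is, and one inverse is recovered from the other by pre- and post-composing with $\zi$ and $N$ (using $\zi^2=\Id_G$ and $N^2=\Id$). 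The converse directions hold by the symmetry of the displayed equivalence.
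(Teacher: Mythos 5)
Your whole argument funnels through the identity $\mathbb F^-L\circ\zi=-\,\mathbb F^+L$, and that identity is false in general; this is exactly the step you flagged as the main obstacle, and it cannot be established. Carrying out the computation you outline: by (\ref{inv2}) one has $D_g\zi\bigl(\lvec X(g)\bigr)=-\rvec X(g^{-1})$, hence
\[
\mathbb F^-L(g^{-1})(X)=\rvec X(L)(g^{-1})
=dL(g^{-1})\bigl(-D_g\zi(\lvec X(g))\bigr)
=-\,\lvec X(L\circ\zi)(g)=-\,\mathbb F^+(L\circ\zi)(g)(X),
\]
so the correct intertwining is $\mathbb F^-L\circ\zi=-\,\mathbb F^+(L\circ\zi)$. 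The chain rule inevitably evaluates $dL$ at $g^{-1}$ and pulls it back through $\zi$, so the Lagrangian appearing on the right-hand side is $L\circ\zi$, not $L$; the two agree only when $L$ is inversion-invariant. What your argument actually proves is: $\mathbb F^-L$ is regular at $g^{-1}$ if and only if $\mathbb F^+(L\circ\zi)$ is regular at $g$ --- which is not the claimed statement.

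Moreover, the gap cannot be patched, because the literal statement is false. Take the inverse smooth loop $G=(\R,+)$. Both prolongations of $X\in\mathfrak g\cong\R$ are the constant field $X$, so $\mathbb F^+L=\mathbb F^-L=L'$ and regularity at $g$ means $L''(g)\neq 0$. For $L(x)=x^3/6-x^2/2$ we get $L''(1)=0$ but $L''(-1)=-2\neq 0$: $\mathbb F^+L$ fails to be regular at $g=1$ while $\mathbb F^-L$ is regular at $g^{-1}=-1$. (Consistently, the corrected identity compares $\mathbb F^-L$ at $g^{-1}$ with the Legendre map of $L\circ\zi$ at $g$, and indeed $(L\circ\zi)''(1)=L''(-1)\neq 0$.) For what it is worth, the paper's own one-line proof commits the same oversight: it intertwines $\lvec X$ and $\rvec X$ by $\zi_*$ and concludes that $\mathbb F^{\pm}L$ are ``simultaneously'' regular, ignoring that $dL$ is evaluated at two different base points, i.e.\ it silently replaces $L\circ\zi$ by $L$. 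Your more explicit write-up has the virtue of exposing exactly where this fails, but as a proof of the stated theorem it breaks at that step; a correct version of the theorem must either assume $L\circ\zi=L$ or replace $L$ by $L\circ\zi$ in one of the two Legendre maps.
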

\begin{proof}
According to (\ref{inv2}), $\zi_*\rvec X(g)=\lvec X(g^{-1})$ and $\zi$ is a diffeomorphism, so they are simultaneously local diffeomorphisms and injective.
\end{proof}

\begin{definition}
A discrete Lagrangian $L:G\to \mathbb R$ on smooth loop $G$ is said to be \emph{regular} if and only if the Legendre transformation $\mathbb F^+L$ is a local diffeomorphism.	If $\mathbb F^+L$  is global diffeomorphism, $L$ is called to be \emph{hyperregular}.
\end{definition}
\begin{theorem}\label{tl1} For an inverse smooth loop $G$ the following are equivalent:
\begin{itemize}
\item A discrete Lagrangian $L:G\to \mathbb R$ on smooth loop $G$ is to be {regular};
\item $\mathbb F^-L$ is a local diffeomorphism;
\end{itemize}
Moreover, $L$ is hyperregular if and only if $\mathbb F^- L$ is a global diffeomorphism. In this case the discrete Lagrangian evolution operator is a diffeomorphism.
\end{theorem}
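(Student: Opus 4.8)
The plan is to reduce the whole statement to Theorem~\ref{tl} together with Proposition~\ref{p2}, which already encode the symmetry between $\mathbb F^+L$ and $\mathbb F^-L$ furnished by the inversion $\zi$. No new computation with the prolongations $\lvec X$, $\rvec X$ should be needed; the argument is essentially formal bookkeeping on top of the already established Theorem~\ref{tl}.

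First I would settle the equivalence of the two listed conditions. By definition $L$ is regular precisely when $\mathbb F^+L$ is a local diffeomorphism at \emph{every} point $g\in G$. The first part of Theorem~\ref{tl} asserts that $\mathbb F^+L$ is regular at $g$ if and only if $\mathbb F^-L$ is regular at $g^{-1}$. Since by (\ref{inv1}) the inversion $\zi:g\mapsto g^{-1}$ is an involution, hence a bijection of $G$, letting $g$ range over all of $G$ makes $g^{-1}$ range over all of $G$ as well. Therefore the assertion ``$\mathbb F^-L$ is a local diffeomorphism at $g^{-1}$ for every $g$'' coincides with ``$\mathbb F^-L$ is a local diffeomorphism at every point of $G$''. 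This yields the claimed equivalence: $L$ is regular if and only if $\mathbb F^-L$ is a local diffeomorphism.

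Next I would dispatch the hyperregular claim and the statement on the evolution operator together. The hyperregular part is immediate from the second part of Theorem~\ref{tl}: $L$ is hyperregular means by definition that $\mathbb F^+L$ is a global diffeomorphism, and this holds if and only if $\mathbb F^-L$ is a global diffeomorphism. For the evolution operator I would invoke Proposition~\ref{p2}, according to which $\gamma_L$ is the discrete flow exactly when $\mathbb F^-L\circ\gamma_L=\mathbb F^+L$. In the hyperregular case both Legendre maps are diffeomorphisms by what was just shown, so this relation can be solved uniquely as $\gamma_L=(\mathbb F^-L)^{-1}\circ\mathbb F^+L$; being a composition of diffeomorphisms, $\gamma_L$ is itself a diffeomorphism (and in particular exists and is unique).

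The only point requiring a little care — the closest thing to an obstacle — is the quantifier manipulation in the second paragraph: one must genuinely use that $\zi$ is a bijection in order to pass from ``regularity of $\mathbb F^-L$ at the points $g^{-1}$'' to ``regularity of $\mathbb F^-L$ at all points of $G$'', rather than tacitly assuming the two phrasings coincide. Once this is made explicit, everything else follows directly from Theorem~\ref{tl} and Proposition~\ref{p2}.
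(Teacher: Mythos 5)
Your proposal is correct and follows exactly the paper's own route: the paper's proof is the one-line remark that the theorem ``follows directly from Theorem~\ref{tl} and Proposition~\ref{p2}'', which is precisely the reduction you carry out. You have merely made explicit the routine details (the quantifier bookkeeping via bijectivity of $\zi$, and solving $\mathbb F^-L\circ\gamma_L=\mathbb F^+L$ for $\gamma_L$) that the paper leaves to the reader.
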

\begin{proof}
This follows directly from Theorem \ref{tl} and Proposition \ref{p2}.
\end{proof}

\subsection{Hamiltonian formalism}
In the category of Lie group, the cotangent bundle of a Lie group is a symplectic groupoid over the dual of the tangent algebra.  The Hamiltonian formalism for discrete Lagrangian systems on the Lie group $G$ with Lagrangian $L:G\to \mathbb R$ is based on the fact that $L$ generates a Lagrangian submanifod $dL(G)\subset \sT^*G$ of the cotangent groupoid which, under a hypothesis of non-degeneracy, determines a Poisson map from $\mathfrak g^*$ to itself. In this case, if $g$ and $h$ are solutions of the discrete Euler-Lagrange equations for the regular Lagrangian $L$, (see \cite{MMM}) then there exist two open subsets $U_g$ and $U_h$ of $G$ and a \emph{discrete Lagrangian evolution operator} $\gamma_L:U_{g} \to U_{h}$ such that $\gamma_L(g')=h'$ whenever $(g',h')$ satisfy the corresponding Euler-Lagrange equations, and $\gamma$ is a unique such diffeomorphism. If $L$ is hyperregular, then $\gamma_L=(\mathbb F^-L)^{-1} \circ\mathbb F^+L$. For a hyperregular Lagrangian function $L:G\to \mathbb R$, pushing forward to $\mathfrak g^*$ with the discrete Legendre transformations gives the \emph{discrete Hamiltonian evolution operator} $\tilde \gamma_L:\mathfrak g^*\to  \mathfrak g^*$ given by
	\[
	\tilde \gamma L=\mathbb F^+ L \circ (\mathbb F^-L)^{-1}\,.
	\]
	
Let now $G$ be a smooth loop with the skew-algebra $\mathfrak g$ and the dual $\mathfrak g^*$. There are two projections $\alpha, \beta: \sT ^*G\to \mathfrak g^*$ such that
	\begin{equation}\label{eq:cotangent:groupoid}
\begin{array}{rcl}
\left\langle\beta(\mu _g),X\right\rangle &=& \left\langle\mu _g,D_e (l_g )(X)\right\rangle ,
\mbox{ for }\mu _g\in\sT^*_ g G \mbox{ and } X\in \mathfrak g,\\
\left\langle  \alpha (\nu_h),Y\right\rangle &=&\left\langle \nu _h ,D_e( r_h) (Y)\right\rangle,
\mbox{ for }\nu _h\in \sT^* _hG \mbox{ and }Y\in \mathfrak g\,.
\end{array}
\end{equation}
In other words,
\be\label{e3}
\left\langle\beta(\mu _g),X\right\rangle=\left\langle\mu_g,\lvec X(g)\right\rangle\,,\qquad
\left\langle\alpha(\nu _h),X\right\rangle=\left\langle\nu_h,\rvec X(h)\right\rangle\,.
\ee
Now, we consider a discrete Lagrangian function $L:G\to \mathbb R$ on the smooth loop $G$. The cotangent bundle of $\sT ^*G$ is equipped with a canonical symplectic structure but the lack of associativity is an obstacle for defining a natural loop structure on $\sT^*G$ analogous to the Lie group. In other words in general there is no natural partial multiplication ('loopoid structure') on $\sT^*G$. Setting aside the 'loopoid structure', for any function $L:G\to \mathbb R$ on manifold $G$ the submanifolds $dL(G)\subset \sT ^*G$ is a Lagrangian submanifold of the cotangent bundle. The discrete Euler-Lagrange dynamics can be equivalently described as follows.	
\begin{definition}
 Let $G$ be a smooth loop and $L$ a discrete Lagrangian function on it. A sequence $ \mu_1,...,\mu_n\in \sT^*G$ satisfies the \emph{discrete Lagrangian dynamics} if $\mu_1,...,\mu_n\in dL(G)$ and they are composable sequence in $\sT ^*G$, that is
  \[
   \beta(\mu_k)= \alpha(\mu_{k+1}),\quad k=1,...,n-1\,.
  \]
\end{definition}

\begin{theorem}
 Let $G$ be a smooth loop equipped with a discrete Lagrangian $L:G\to \mathbb R$. Then a sequence $ \mu_1,...,\mu_n\in \sT^*G$ satisfies the discrete Lagrangian dynamics of $dL(G)\subset\sT ^*G$ if and only if
 \[
 \mu_k=dL(g_k)\quad \mbox{for some} \quad g_k\in G,\quad k=1,\dots,n,
 \]
 and the discrete Euler -Lagrangian equations $	\lvec X(L)(g_k)=\rvec X(L) (g_{k+1})$ are satisfied, $k=1,\dots,n-1$.
\end{theorem}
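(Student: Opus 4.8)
The plan is to prove the equivalence by unwinding the definitions of the projections $\alpha,\beta$ from (\ref{e3}) and of the discrete Legendre transformations, thereby reducing composability in $\sT^*G$ to the discrete Euler--Lagrange equations (\ref{ELE.}). The observation I would record first is the pair of identities
\[
\beta\circ dL=\mathbb F^+L,\qquad \alpha\circ dL=\mathbb F^-L
\]
of maps $G\to\mathfrak g^*$. Indeed, for $\mu_g=dL(g)\in\sT^*_gG$, formula (\ref{e3}) gives $\langle\beta(dL(g)),X\rangle=\langle dL(g),\lvec X(g)\rangle$, and pairing the differential $dL(g)$ against the prolonged vector $\lvec X(g)$ is precisely the directional derivative $\lvec X(L)(g)=\mathbb F^+L(g)(X)$; the identical computation with $\rvec X$ in place of $\lvec X$ yields $\alpha(dL(g))=\mathbb F^-L(g)$.

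For the forward implication I would start from a sequence $\mu_1,\dots,\mu_n$ satisfying the discrete Lagrangian dynamics. Since $dL$ is a section of $\sT^*G$, membership $\mu_k\in dL(G)$ produces points $g_k\in G$ with $\mu_k=dL(g_k)$ (in fact the unique $g_k$ is the base point of $\mu_k$, though only existence is asserted). The composability condition $\beta(\mu_k)=\alpha(\mu_{k+1})$ then reads, after substituting $\mu_k=dL(g_k)$ and invoking the identities above, $\mathbb F^+L(g_k)=\mathbb F^-L(g_{k+1})$ in $\mathfrak g^*$; evaluating both sides on an arbitrary $X\in\mathfrak g$ gives $\lvec X(L)(g_k)=\rvec X(L)(g_{k+1})$ for $k=1,\dots,n-1$, which is exactly (\ref{ELE.}).

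The converse is the same calculation read backwards. Given $\mu_k=dL(g_k)$ one has $\mu_k\in dL(G)$ by construction, while the equations $\lvec X(L)(g_k)=\rvec X(L)(g_{k+1})$, holding for every $X\in\mathfrak g$, are equivalent via the identities to $\beta(\mu_k)=\alpha(\mu_{k+1})$; hence the sequence is composable and satisfies the dynamics.

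Since everything reduces to definition-chasing, there is no genuine obstacle; the one step deserving care is the first, namely recognizing that the cotangent-groupoid projections $\alpha,\beta$ of (\ref{e3}) restrict along the Lagrangian submanifold $dL(G)$ to the two discrete Legendre transformations. This is exactly where the absence of associativity turns out to be harmless: although $\lvec X$ and $\rvec X$ are no longer invariant vector fields, the defining formulas (\ref{e3}) for $\alpha$ and $\beta$ use only the pointwise values $\lvec X(g)$ and $\rvec X(g)$, so the reduction proceeds verbatim as in the Lie group case.
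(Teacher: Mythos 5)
Your proposal is correct and follows exactly the paper's route: the paper's proof consists precisely of the observation that $\mathbb F^+L=\beta\circ dL$ and $\mathbb F^-L=\alpha\circ dL$ (citing \cite{MMS} for the remaining details), which is the key identity you establish before carrying out the definition-chasing in both directions. You have simply written out in full the routine verification that the paper leaves to the reader.
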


\begin{proof}
	It is enough to consider the discrete Legendre transforms of $L$ as $\mathbb F^+L= \beta \circ dL,$ $\mathbb F^-L= \alpha \circ dL:G\to \mathfrak g^*$. For more details confront \cite{MMS}.	
	
	\end{proof}

  We see that the sequence $\mu_1,...,\mu_n\in \sT^*G$ of composable pairs satisfies the discrete Lagrangian dynamics if and only if we have the relation $(\mu_k,\mu_{k+1})\in dL(G)\times dL(G)$, for each pairs of successive elements.
	
	Now, if the restricted map $\alpha: dL(G)\to \mathfrak g^*$ is a (local) diffeomorphism, then the relation $(\mu_k,\mu_{k+1})\in dL(G)\times dL(G)$ is the graph of an explicit flow map $\mu_k\to \mu_{k+1}$ given by the composition $(\alpha\mid_{dL(G)} )^{-1}\circ \beta\mid_{dL(G)}$. If the restricted map $\beta: dL(G)\to \mathfrak g^*$ is also a local diffeomorphism, then the flow is locally reversible with the inverse $(\beta\mid_{dL(G)} )^{-1}\circ \alpha\mid_{dL(G)}$. When both the restricted maps $\alpha$ and $\beta$ are diffeomorphism we say $dL(G)$ is a \emph{Lagrangian bisection} because it is simultaneously a section of $\za$ and $\zb$. Obviously, the restricted maps $\alpha$ and $\beta$ correspond precisely to the discrete Legendre transforms $\mathbb F^{\pm}$ and the local bisection condition corresponds to regularity of discrete Lagrangian $L$. Therefore, if $L$ is regular, then the discrete Lagrangian flow map $(\alpha\mid_{dL(G)} )^{-1}\circ \beta\mid_{dL(G)}$ is a local diffeomorphism on $dL(G)$, then the \emph{discrete Hamiltonian flow map} is given by reversing the order of composition $\beta\mid_{dL(G)} \circ (\alpha\mid_{dL(G)})^{-1}$ which is a local diffeomorphism on $\mathfrak g^*$.
	
	\section{The smooth loop of octonions}
	The octonions $\mathbb{O}$ are the noncommutative non-associative algebra which is one of the four division algebras that exist over the real numbers. The most elementary way to construct the octonions is to give their multiplication table. Every octonion can be expressed in terms of a natural basis $\{e_0,e_1,\cdot,\cdot,\cdot,e_7\}$ where $e_0=1$ represents the identity element and the imaginary octonion units $e_i$, $\{i=1,...,7\}$ satisfy the multiplication rule
	$e_i e_j = −\delta_{i}^{j} +f_{ijk}e_k$, where $\delta_{i}^{j} $ is the Kronecker's delta and $f_{ijk}$'s are completely anti-symmetric structure constants which read as \cite{Sc}
	\[
	f_{123} =  f_{147} =  f_{165} =  f_{246 }=  f_{257} =  f_{354} =  f_{367} = 1.
	\]
	
	The multiplication is subject to the relations
	\[
	\forall i\ne 0\quad [e_i^2=-1]\,, \qquad
	e_ie_j=-e_je_i,\quad \mbox{for} \quad i\ne j\ne 0.
	\]
	
and the following multiplication table.

	\begin{center}
		{\footnotesize   } {\footnotesize  Multiplication table
		}\\
	\end{center}
	\begin{center}
		\begin{tabular}{| l || l | l | l | l | l | l | l | l | l | p{15mm} }
			
			\hline\hline
			\vspace{-1mm}

			{\scriptsize $e_ie_j$}&{\scriptsize $~e_0$}
			& {\scriptsize $~e_1$ }& {\scriptsize $~e_2$ }& {\scriptsize $~e_3$ }& {\scriptsize $~e_4$ }& {\scriptsize $~e_5$ }& {\scriptsize $~e_6$ }& {\scriptsize $~e_7$ }  \smallskip\\
			\hline\hline
			
			{\scriptsize $~e_0$ }&{\scriptsize $~e_0$}
			& {\scriptsize $~e_1$ }& {\scriptsize $~e_2$ }& {\scriptsize $~e_3$ }& {\scriptsize $~e_4$ }& {\scriptsize $~e_5$ }& {\scriptsize $~e_6$ }& {\scriptsize $~e_7$ } \\
			\hline
			{\scriptsize $~e_1$ }&{\scriptsize $~e_1$}
			& {\scriptsize $-e_0$ }& {\scriptsize $~e_3$ }& {\scriptsize $-e_2$ }& {\scriptsize $~e_5$ }& {\scriptsize $-e_4$ }& {\scriptsize $-e_7$ }& {\scriptsize $~e_6$ } \\
			\hline
			{\scriptsize $~e_2$}&{\scriptsize $~e_2$}
			& {\scriptsize $-e_3$ }& {\scriptsize $-e_0$ }& {\scriptsize $~e_1$ }& {\scriptsize $~e_6$ }& {\scriptsize $~e_7$ }& {\scriptsize $-e_4$ }& {\scriptsize $-e_5$ }  \\
			\hline
			{\scriptsize $~e_3$}&{\scriptsize $~e_3$}
			& {\scriptsize $~e_2$ }& {\scriptsize $-e_1$ }& {\scriptsize $-e_0$ }& {\scriptsize $~e_7$ }& {\scriptsize $-e_6$ }& {\scriptsize $~e_5$ }& {\scriptsize $-e_4$ }  \\
			\hline
			{\scriptsize $~e_4$ }&{\scriptsize $~e_4$}
			& {\scriptsize $-e_5$ }& {\scriptsize $-e_6$ }& {\scriptsize $-e_7$ }& {\scriptsize $-e_0$ }& {\scriptsize $~e_1$ }& {\scriptsize $~e_2$ }& {\scriptsize $~e_3$ }  \\
			\hline
			{\scriptsize$~e_5$ }&{\scriptsize $e_5$}
			& {\scriptsize $~e_4$ }& {\scriptsize $-e_7$ }& {\scriptsize $~e_6$ }& {\scriptsize $-e_1$ }& {\scriptsize $-e_0$ }& {\scriptsize $-e_3$ }& {\scriptsize $~e_2$ }  \\
			\hline
			{\scriptsize$~e_6$ }&{\scriptsize $~e_6$}
			& {\scriptsize $~e_7$ }& {\scriptsize $~e_4$ }& {\scriptsize $-e_5$ }& {\scriptsize $-e_2$ }& {\scriptsize $~e_3$ }& {\scriptsize $-e_0$ }& {\scriptsize $-e_1$ }  \\
			\hline
			{\scriptsize$~e_7$ }&{\scriptsize $~e_7$}
			& {\scriptsize $-e_6$ }& {\scriptsize $~e_5$ }& {\scriptsize $~e_4$ }& {\scriptsize $-e_3$ }& {\scriptsize $-e_2$ }& {\scriptsize $~e_1$ }& {\scriptsize $-e_0$ }  \\
			\hline
			
		\end{tabular}
	\end{center}

\bigskip
	The associator $[g,h,k] =(gh)k - g(hk)$ of three octonions does not vanish in general but octonions satisfy a weak form of associativity known as alternativity, namely $[g,h,g]=0$. The reason is that, two octonions determine a quaternionic subalgebra of the octonions, so that any product containing only two octonionic directions is associative (diassociativity).
	
	The octonions are a generalization of the complex numbers, with seven imaginary units, so octonionic conjugation is given by reversing the sign of the imaginary basis units. Conjugation is an involution of $\mathbb{O}$  satisfying $(gh)^* = h^* g^*$. The inner product on $\mathbb{O}$ is inherited from $\mathbb R^8$ and can be rewritten
	\begin{equation}\label{inner}
	\left\langle g,h\right\rangle = \frac{(gh^* + hg^*)}{2} = \frac{(h^*g + g^*h)}{2}\in\R\,,
	\end{equation}
	and the norm of an octonion is just $\|g \| ^2 = gg^*$ which satisfies the defining property of a normed division algebra, namely $\|gh\| = \| g\| \| h \|$.
The scalar product is invariant with respect to the multiplication: $\langle ag,ah\rangle=\langle g,h\rangle$ for $a\ne 0$.

Every nonzero octonion  $g\in \mathbb{O}$ has an inverse $g^{-1}=\frac{g^*}{\| g \| ^2}$, such that
	\begin{equation}\label{inverse}
gg^{-1}=g^{-1}g=1,	
	\end{equation}
	which makes the set of invertible octonions to be an inverse loop with respect to the octonion multiplication.
	 We remark that the inverse is a genuine one, i.e.,
	 \[
	g(g^{-1}h)=g^{-1}(gh)=h, \quad \forall g,h\in \mathbb{O},
	 \]
which is stronger than the standard property (\ref{inverse}) for non-associative algebra.

	Actually, the set $\Oc^\ti$ of invertible octonions is a smooth Moufang loop under octonion multiplication.

	One may represent an octonion as a pair of quaternions $\mathcal Q$, then multiplication can be defined by
\[
(a,b)\cdot(c,d)=(ac-d^*b,da+bc^*), \quad \mbox{for} \; a,b,c,d\in \mathcal Q\,,
\]
where the involution, addition and multiplication are those in quaternions. In this case the inverse of $(a,b)$ is given by $$(a,b)^{-1}=\displaystyle\frac{(a,b)^*}{\parallel a \parallel ^2+\parallel b \parallel ^2}\,,$$ where $(a,b)^*=(a,-b^*)$ and the norm is in quaternions.

\begin{example}\label{ex1}	
	The set of all automorphisms of the algebra $\mathbb{O}$, that is the set of invertible linear transformations $A\in Aut (\mathbb{O})$, forms a Lie group called $G_2$ which is the smallest of the exceptional Lie groups. We will show that the semidirect product $\mathbb{O}^\ti \ltimes G_2 $ is an inverse loop under the multiplication
	\[
	(g,A)\bullet(h,B)=(g A(h), A\circ B),
	\]
	with identity $(1,\Id)$ and inverse $(g,A)^{-1}=(A^{-1}(g^{-1}),A^{-1})$. The thing which needs to be checked is the following inverse property,
	\[
	\begin{array}{rcl}
	(g,A)^{-1}\bullet \left( (g,A)\bullet (h,B)\right) &=&(A^{-1}(g^{-1}),A^{-1})\bullet (g\cdot A(h), A\circ B)\\[3pt]
	&=& (A^{-1}(g^{-1})\cdot A^{-1}(g\cdot A(h)),B)=(h,B)\,.
	\end{array}
	\]
	Here we use the fact that $A^{-1}(g^{-1}\cdot g)=A^{-1}(g^{-1})\cdot A^{-1}(g)=1$. Similarly,
\[
 \left( (g,A)\bullet (h,B)\right)\bullet (h,B)^{-1}\,.
 \]
Of course, because $\Oc^\ti$ are not associative the above smooth loop is not a Lie group.
	
\end{example}
\begin{example}\label{ex2}
The manifold of unitary octonions
\[
{\mathbb{O}}_1	=\{a\in \mathbb{O}, \; \|a\|=1\}
\]
is closed under the octonion multiplication and therefore forms a Moufang loop. The manifold $\mathbb{O}_1$ is diffeomorphic to seven-sphere $S_7$, the only paralellizable sphere which does not carry a Lie group structure. To find the tangent algebra of $\mathbb{O}_1$, consider the tangent space
\[
\mathbf o_1	=\sT_{e_0}\mathbb{O}_1=span\{e_1,...,e_7\}
\]
to $\Oc_1$ inside the vector space $\Oc$.
Then, the tangent bundle $\sT \mathbb{O}_1$ is given by the left (or right) prolongation of imaginary octonions, that is $\sT \mathbb{O}_1 =span \{\lvec e_1,...,\lvec e_7\}$, where
\[
\lvec e_i(a)=D_{e_0}(l_a)(e_i)=ae_i\in \sT_a\mathbb{O}_1,\quad a\in \mathbb{O}_1.	
\]
Similarly
\[
\rvec e_i(a)=D_{e_0}(r_a)(e_i)=e_ia\in \sT_a\mathbb{O}_1,\quad a\in \mathbb{O}_1.	
\]

We have
\[
\begin{array}{rcl}
[\lvec {e_i},\lvec {e_j}]{(a)}&=&\displaystyle\frac{d}{dt}|_{t=0}\lvec e_j(a+tae_i)-\frac{d}{dt}|_{t=0}\lvec e_i(a+tae_j)\\[6pt]
&=&\displaystyle\frac{d}{dt}|_{t=0}(a+tae_i)e_j-\frac{d}{dt}|_{t=0}(a+tae_j)e_i=(ae_i)e_j-(ae_j)e_i\\[6pt]
&=&\displaystyle\lvec{a^{-1}((ae_i)e_j-(ae_j)e_i)}(a)\,.
\end{array}
\]	
In particular $[\lvec {e_i},\lvec {e_j}]{(e_0)}=2e_ie_j$.
Thus, $(\mathbf o_1,[e_i,e_j]=2e_ie_j)$ is the skew-algebra (Mal'cev algebra) corresponding to the smooth loop $\mathbb{O}_1$. The corresponding Leibniz structure is
$$\zL=\sum_{i,j=1}^7e_ie_j\pa_{e_i}\we\pa_{e_j}\,.$$
This agrees with the Campbell-Hausdorff formula. The exponential map $\exp:\o_1\to\Oc_1$ is
\be\label{ex}\exp(e)=\cos(\nm{e})+\sin(\nm{e})e/\nm{e}\,.\ee
Hence,
$$\exp(te_i)\exp(te_j)=\cos^2(t)+\sin(t)\cos(t)(e_i+e_j)+\sin^2(t)e_ie_j=\exp(e)\,.$$
If $i\ne j$, then, in view of (\ref{ex}),
	\beas
 e&=&\frac{\arcsin(\sqrt{1-\cos^4(t)})}{\sqrt{1-\cos^4(t)}}\left(\sin(t)\cos(t)(e_i+e_j)+\sin^2(t)e_ie_j\right)\\
&=&(1+o(t))\left(t(e_i+e_j)+t^2e_ie_j+o(t^2)\right)=\left(t(e_i+e_j)+t^2e_ie_j+o(t^2)\right)\,.
\eeas
Since the Campbell-Hausdorff formula reads
$$te_i*tej=t(e_i+e_j)+(t^2/2)[e_1,e_2]+o(t^2)\,,$$
we get again
$$[e_1,e_2]=2e_ie_j\,.$$

It is also easily seen that $\lvec e_1,\dots,\lvec e_7$ do not form a Lie algebra over $\R$ (i.e., $\Oc$ is not a Lie group).

Similarly we obtain
\[
[\rvec {e_i},\rvec {e_j}]{(a)}=e_j(e_ia)-e_i(e_ja)
\]	
so that $[e_i,e_j]_l=-[e_i,e_j]_r=2e_ie_j$. Note that
\[
[\rvec {e_i},\lvec {e_j}]{(a)}=e_j(ae_i)-(e_ja)e_i
\]	
which is $0$ at $a=e_0$, but generally not $0$ (the left prolongations do not commute with the right prolongations) due to non-associativity of $\Oc$.

\end{example}
\begin{example}\label{ex3}
Consider the cotangent bundle $\sT ^*\mathbb{O}_1$ of unit octonions. Due to the scalar product (\ref{inner}), $\sT ^*\mathbb{O}_1= \sT \mathbb{O}_1$ as vector bundles. According to the general rule for the Lie groupoid $\sT^*G\rightrightarrows \mathfrak g^*$ in case of a Lie group $G$, we define the source and target projections $ \alpha,  \beta:\sT ^*G\to \mathbf o^*_1$ where $\mathbf o^*_1$ is the dual of $\mathbf o_1$:
\[
\left\langle  \beta (\mu_g),X\right\rangle = \left\langle \mu_g ,gX\right\rangle\,,\quad\left\langle  \alpha (\nu_h),X\right\rangle=\left\langle \nu_h  ,Xh\right\rangle\,.
\]
Here we use a self-explaining notation $\lvec X_g=gX$, $\rvec X_h=Xh$ and we interpret the tangent and cotangent vectors to $\Oc_1$ as octonions. In this sense the above pairings can be understood as the scalar products.

Two elements $\mu_g\in \sT ^*_g\mathbb{O}_1$ and $\nu_h\in \sT ^*_h\mathbb{O}_1$ are \emph{composable}, i.e., $ \beta(\mu_g)= \alpha (\nu_h)$, if
\[
\left\langle  \beta (\mu_g),X\right\rangle =\left\langle  \alpha (\nu_h),X\right\rangle
\quad\Leftrightarrow\quad
\left\langle \mu_g ,gX\right\rangle=\left\langle \nu_h  ,Xh\right\rangle \quad \text{for all}\quad X\in\mathbf o_1.
\]

The above holds if there exists an element $\sigma\in \mathbf o^*_1=\o_1$ such that

\[
\mu_g=D_g^*(l_{g^{-1}})(\sigma)=g\sigma \quad \mbox{and} \quad \nu_h=D_h^*(r_{h^{-1}})(\sigma)=\sigma h.
\]

Let us try to define the product in $\sT^*\Oc_1$ like it is done for Lie groups. The multiplication $\zm_g\bullet\zn_h\in\sT^*_{gh}\Oc_1$
is then defined by the equation
\be\label{e4} \langle\zm_g,X_g\rangle+\langle\zn_h,Y_h\rangle=\langle\zm_g\bullet\zn_h, X_g\bullet Y_h\rangle\,,
\ee
where $ X_g\bullet Y_h$ is the multiplication in the tangent loop as described in Theorem \ref{t2}.
Denote $\zvy_{gh}=\zm_g\bullet\zn_h$. Using Theorem \ref{t2} we can rewrite (\ref{e4}) as
\be\label{e5}
\langle\zm_g,X_g\rangle+\langle\zn_h,Y_h\rangle=\langle\zvy_{gh},X_gh+gY_h\rangle.
\ee
 Since $X_g,Y_h$ are arbitrary, so we get out of (\ref{e5}) a system of equations
$$\langle\zm_g,X_g\rangle=\langle\zvy_{gh},X_gh\rangle\,,\quad \langle\zn_h,Y_h\rangle=\langle\zvy_{gh},gY_h\rangle\,.
$$
Using the fact that the scalar product on $\Oc_1$ is invariant with respect to the multiplication, we get in turn

$$\langle\zm_g,X_g\rangle=\langle\zvy_{gh}h^{-1},X_g\rangle\,,\quad \langle\zn_h,Y_h\rangle=\langle g^{-1}\zvy_{gh},Y_h\rangle\,.$$
Hence, $\zm_g=g\zs=\zvy_{gh}h^{-1}$ and $\zn_h=\zs h=g^{-1}\zvy_{gh}$, so that
$$\zs=g^{-1}(\zvy_{gh}h^{-1})=(g^{-1}\zvy_{gh})h^{-1}\,.$$
In this way we get the commutativity of the right translation by $h^{-1}$ and left translation by $g^{-1}$  when acting on
$\zvy_{gh}$ which can be taken arbitrary octonion orthogonal to $gh$. This is not satisfied in octonions as $\Oc$ is non-associative. This implies that the standard way of defining the multiplication in $\sT^*\Oc_1$ does not give a well-defined product.
\end{example}
\section{Mechanics on octonions}
The algebra of octonions $\Oc$ is spanned by $\{ e_0=1\}$-the unit and 7 additional unitary elements
$\{ e_1,e_2,\dots, e_7\}$, $e_i^2=-1$, $e_ie_j=-e_je_i$ for $i\ne j$. The algebra is non-commutative and non-associative (e.g. $(e_1e_4)e_7=e_5e_7=e_2\ne e_1(e_4e_7)=e_1e_3=-e_2)$. In this section we will construct the discrete mechanics on the manifold of unit octonions
\[
{\mathbb{O}}_1	=\{a\in \mathbb{O}, \; \|a\|=1\},
\]
which is an inverse smooth loop under the octonion multiplication.
Let $L: \mathbb{O}_1\to \mathbb R$ be a Lagrangian function, then the discrete Euler-Lagrange equations read as recurrence equation
	\[
	\lvec e_i(L)(a_n)=\rvec e_{i}(L)(a_{n+1})\,,
	\]
	where $\lvec e_i(a)=D_{e_0}(l_a)(e_i)=ae_i $ and $\rvec e_i(a)=D_{e_0}(r_a)(e_i)=e_ia$ are the left and the right prolongation by the element $a\in \mathbb{O}_1$. A solution for those equations is a sequence of elements $\mathbb{O}_1$.
	
	If we take the Lagrangian as a linear function, for instance take $L=e^1=\left\langle e_1,\cdot\right\rangle $ defined by the inner product (\ref{inner}), then
	\[
	\lvec e_i(L)(a_n)=(a_n e_i)(L)(a_n)=\left\langle e_1,a_ne_i\right\rangle .
	\]
	The right-hand side of the above relation is obtained by taking the integral curve $\gamma(t) =a_n+ta_ne_i$ for the tangent vector $a_ne_i$ and then we have		
	\[
	\frac{d}{dt}|_{t=0} L(a_n+ta_ne_i)=\frac{d}{dt}|_{t=0}\left\langle e_1,a_n+ta_ne_i\right\rangle .
	\]
	Note that we interpret the tangent vector at the points of $\mathbb{O}_1$ as an element of octonions.
	
	Therefore, by the definition the Euler-Lagrange equations are
	\[
	\left\langle e_1,a_ne_i-e_ia_{n+1}\right\rangle =0,\quad \mbox{for}\quad i=1,...,7.
	\]			
	Every element $a_n\in \mathbb{O}_1$ can be written as $a_n=\alpha_n^0+\alpha^k_ne_k$ such that $\sum _{s=0} ^7|\alpha_n^s|^2=1$, so the above equations turn to
		\begin{equation}\label{ELE}
\sum_{k=1}^7	\left\langle e_1,\alpha_n^0e_i-\alpha^0_{n+1}e_i+(\alpha_n^k+\alpha_{n+1}^k)e_ke_i\right\rangle=0,\quad\mbox{for}\quad i=1,...,7.
	\end{equation}  	
	Now, if $i=1$, since $\left\langle e_1,e_1\right\rangle =1$ and $\left\langle e_1,e_ke_1\right\rangle =0$ for $k\ne 0$, we get $\alpha_n^0-\alpha_{n+1}^0=0$.
	
	If $i>1$, the two first expressions of (\ref{ELE}) are zero because $\left\langle e_1,e_i\right\rangle =0$ for $i\ne 1$ and thus we left by the third expression, that is
	\[
\sum_{k=1}^7\left\langle e_1,(\alpha_n^k+\alpha_{n+1}^k)e_ke_i\right\rangle=0,\quad\mbox{for}\quad i=1,...,7\,.	\]
	But for each $k$, there is some $i$ such that $e_ke_i=\pm e_1$ and all $i's$ are used.
 Consequently, we get the Euler-Lagrange equations
	\[
	\alpha_n^0-\alpha_{n+1}^0=0, \quad \alpha_n^k+\alpha_{n+1}^k=0,\quad k=1,...,7.
	\]
	It is obvious from the equations that the solution of Euler-Lagrange equations are just the conjugate pairs in $\mathbb{O}_1$.
	
Next step is to check whether  the Lagrangian $L$ is regular or hyperregular. So, we would need to find the Legendre maps associated with $L$. Consider the tangent skew-algebra $\mathbf o_1$ and the its dual $\mathbf o^*_1$ with the basis $\{e^1,...,e^7\}$. The corresponding Legendre maps $\mathbb F^+ L,\mathbb F^- L:\mathbb{O}_1\to \mathbf o^*_1$ are
\[
\mathbb F^+ L(a)=\sum_{i=1}^7 \lvec e_i(a)(L) e^i,\quad \mathbb F^- L(a)=\sum_{i=1}^7 \rvec e_i(a)(L) e^i,\quad a\in \mathbb{O}_1.
\]
Let us remark that there is no hyperregular Lagrangian on unit octonions $\mathbb{O}_1$, because the Legendre maps are $\mathbb F^+ L,\mathbb F^- L:S_7\to \mathbb R^7$ which cannot be diffeomorphisms. Thus we can only find Lagrangians which are (locally) regular.

Consider the linear Lagrangian $L=\left\langle e_1,\cdot\right\rangle=e^1 $. We have $\rvec e_i(a)(e^1)=\left\langle e_1,e_ia\right\rangle $ and corresponding Legendre map
\[
\mathbb F^-L(a) =\sum_{i=1}^7 \rvec e_i(e^1)(a) e^i=\sum_{i=1}^7\left\langle e_1,e_ia\right\rangle e^i,\quad a\in \mathbb{O}_1.
\]	
The Lagrangian $L=e^1$ is not regular at $e_0$ because
	\[
\mathbb F^-L(e_0) =\sum_{i=1}^7\left\langle e_1,e_i\right\rangle e^i=e^1
\]	
and
\[
D_{e_0}(\mathbb F^-L)(e_1)=\displaystyle \frac{d}{dt}_{|_{t=0}}\mathbb F^-L(e_0+te_1)=\sum_{i=1}^7\left\langle e_1,e_ie_1\right\rangle e^1 =0.
\]

	If we take the Lagrangian $L= \displaystyle \frac{(e^1)^2}{2}$, then
\[
\rvec e_i(a)(L)=e^1(a)\rvec e_i(a)(e^1)=\left\langle e_1,a\right\rangle\left\langle e_1,e_i a\right\rangle
\]

and $\mathbb F^-L(a)=\sum_{i=1}^7\left\langle e_1,a\right\rangle\left\langle e_1,e_i a\right\rangle e^i$. We have $\mathbb F^-L(e_s) =0$ for every $s=0,...,7$ and $L$ is not regular in a neighbourhood of $e_0$.  Indeed,

\[
D_{e_0}(\mathbb F^-L)(e_s)=\displaystyle \frac{d}{dt}_{|_{t=0}}\mathbb F^-L(e_0+te_s)=\sum_{i=1}^7\left\langle e_1,e_s\rangle \langle e_1,e_i\right\rangle e^i =0,\quad \mbox{for}\quad s\neq 1.
\]

The discrete Euler-Lagrange equation is
$$\left\langle e_1,a_n\right\rangle\left\langle e_1,e_i a_n\right\rangle=\left\langle e_1,a_{n+1}\right\rangle\left\langle e_1, a_{n+1}e_i\right\rangle\,,\quad i=1,\dots,7\,.$$

If we write $a_n=\sum_{s=0}^7\za^s_ne_s$ and $a_{n+1}=\sum_{s=0}^7\za^s_{n+1}e_s$, then this reduces to the quadratic recurrence equation

$$\za_n^1\za_n^0=\za_{n+1}^1\za_{n+1}^0,\quad \mbox{and}\quad  \za^1_n\za^j_n=-\za^1_{n+1}\za^j_{n+1}\,,\quad \mbox{for}\quad j=2,\dots 7\,.$$

Now, we take the Lagrangian
\be\label{lag}L=\sum_{k=1}^7 \displaystyle m_k\frac{(e^k)^2}{2}\,,
\ee
where $m_k> 0$, as the `total kinetic energy' of the system. Then
\[
\rvec e_i(a)(L)=\sum_{k=1}^7m_ke^k(a)\rvec e_i(a)(e^k)=\sum_{k=1}^7\left\langle m_ke_k,a\right\rangle\left\langle e_k,e_i a\right\rangle,
\]
and

\[
\lvec e_i(a)(L)=\sum_{k=1}^7m_ke^k(a)\lvec e_i(a)(e^k)=\sum_{k=1}^7\left\langle m_ke_k,a\right\rangle\left\langle e_k,ae_i \right\rangle,
\]
so that

\begin{equation}\label{FL}
\mathbb F^-L(a)=\sum_{i,k=1}^7\left\langle m_ke_k,a\right\rangle\left\langle e_k,e_i a\right\rangle e^i.
\end{equation}
and
\begin{equation}\label{F+L}
\mathbb F^+L(a)=\sum_{i,k=1}^7\left\langle m_ke_k,a\right\rangle\left\langle e_k,ae_i \right\rangle e^i.
\end{equation}

The discrete Euler-Lagrange equation is	
$$\sum_{k=1}^7\left\langle m_k e_k,a_n\right\rangle\left\langle e_k,e_i a_n\right\rangle=\sum_{k=1}^7\left\langle m_ke_k,a_{n+1}\right\rangle\left\langle e_k, a_{n+1}e_i\right\rangle\,,\quad i=1,\dots,7\,.
$$

If we write $a_n=\sum_{s=0}^7\za^s_ne_s$ and $a_{n+1}=\sum_{s=0}^7\za^s_{n+1}e_s$, then this reduces to the quadratic recurrence equation
$$\za_n^i\za_n^0=\za_{n+1}^i \za_{n+1}^0,\quad i=1,...,7$$
which does not depend on $m_k$.
Since $\sum_{i=0}^7\left(\za_{n}^i\right)^2=1$, we have
\be\label{rec}\sum_{i=1}^7\left(\za_{n}^i\za_n^0\right)^2=(\za_n^0)^2\left(1-(\za_n^0)^2\right)=(\za_{n+1}^0)^2\left(1-(\za_{n+1}^0)^2\right)
\ee
which have non-trivial different solutions giving rise to non-trivial solutions of the Euler-Lagrange equation.
If, however, $\za_n^0$ and $\za_{n+1}^0$ are close to 1, thus $a_n$ and $a_{n+1}$ are close to $e_0$, then
$\za_n^0=\za_{n+1}^0$, since the function $f(x)=x^2(1-x^2)$ is monotonic on the interval $[1/\sqrt{2},1]$, so we get  only trivial solutions $a_n=a_{n+1}$.
\begin{theorem}\label{tel}
The discrete Euler-Lagrange equation for the Lagrangian (\ref{lag}) on the smooth loop $\Oc_1$ admits in a neighbourhood of $e_0$ only trivial solutions.
\end{theorem}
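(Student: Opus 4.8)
The plan is to notice that essentially all of the analytic work has already been carried out in the discussion preceding the statement, so the proof amounts to organizing those facts and pinning down the meaning of ``a neighbourhood of $e_0$.'' Writing $a_n=\sum_{s=0}^7\za^s_n e_s$ and $a_{n+1}=\sum_{s=0}^7\za^s_{n+1}e_s$, the discrete Euler--Lagrange equations for the Lagrangian (\ref{lag}) have already been reduced to
\be
\za_n^i\za_n^0=\za_{n+1}^i\za_{n+1}^0,\qquad i=1,\dots,7,
\ee
together with the consequence (\ref{rec}) of the unit-norm constraint, namely $f(\za_n^0)=f(\za_{n+1}^0)$ with $f(x)=x^2(1-x^2)$. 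The whole argument therefore reduces to exploiting the injectivity of $f$ near $x=1$ and then cancelling the common nonzero factor $\za^0$.

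First I would fix the neighbourhood. The real part $a\mapsto\za^0=\langle e_0,a\rangle$ is a continuous function on $\Oc_1$ taking the value $1$ at $e_0$, so there is an open neighbourhood $U$ of $e_0$ on which $\za^0>1/\sqrt2$. I would then consider a solution $\{a_n\}$ lying entirely in $U$, so that both $\za_n^0$ and $\za_{n+1}^0$ belong to the interval $(1/\sqrt2,1]$.

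Next comes the monotonicity step. Since $f'(x)=2x(1-2x^2)<0$ on $(1/\sqrt2,1)$, the function $f$ is strictly decreasing, hence injective, on $(1/\sqrt2,1]$; from $f(\za_n^0)=f(\za_{n+1}^0)$ one concludes $\za_n^0=\za_{n+1}^0$. As this common value exceeds $1/\sqrt2$ and is in particular nonzero, the reduced equations may be divided through by it to give $\za_n^i=\za_{n+1}^i$ for every $i=1,\dots,7$, and therefore $a_n=a_{n+1}$. This is exactly the trivial solution, so the claim follows.

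I expect the only delicate point to be the role of the neighbourhood hypothesis, and it is worth isolating it explicitly. Globally $f$ is two-to-one on $[0,1]$ (it attains its maximum $1/4$ at $x=1/\sqrt2$ and vanishes at both endpoints), so (\ref{rec}) by itself admits a second branch in which $\za_{n+1}^0$ is the small root near $0$; this branch produces the genuinely non-trivial solutions noted before the statement, and it is precisely the restriction to $U$ that discards it by keeping both real parts inside the single interval of monotonicity. Hence the localization is not a mere technical convenience but the essential hypothesis, while the remainder of the proof is the elementary cancellation above.
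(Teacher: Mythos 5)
Your proof is correct and follows essentially the same route as the paper: the paper likewise reduces the Euler--Lagrange equations to $\za_n^i\za_n^0=\za_{n+1}^i\za_{n+1}^0$ and equation (\ref{rec}), then invokes the monotonicity of $f(x)=x^2(1-x^2)$ on $[1/\sqrt{2},1]$ to get $\za_n^0=\za_{n+1}^0$ and hence $a_n=a_{n+1}$. Your version merely makes explicit the choice of neighbourhood, the derivative computation, and the final cancellation step, all of which the paper leaves implicit.
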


\noindent We easily see see that $\mathbb F^-L(e_s) =0$ for all $s\in\{0,...,7\}$, hence, changing the base of imaginary octonions, we infer that $\mathbb F^-L(a) =0$ for any imaginary octonion, that supports once more the fact that the Lagrangian is not hyperregular. We have also
\[
D_{e_0}\mathbb F^-L(e_s)=\sum_{i,k=1}^7\left\langle m_ke_k,e_s\right\rangle\left\langle e_k,e_i\right\rangle e^i=m_se^s=D_{e_0}\mathbb F^+L(e_s)\,.
\]
	
Under identification $\sT_{e_0}\Oc_1=\o_1=\o_1^*$ the differential of $\mathbb F^-L:\Oc_1\to\o_1$ at $e_0$, and similarly for $\mathbb F^+L:\Oc_1\to\o_1$, can be identified with the diagonal automorphism on $\o_1$ for which
$e_s\mapsto m_se_s$. In particular $\mathbb F^-L$ and $\mathbb F^+L$ are regular in a neighbourhood of $e_0$.
According to Theorem \ref{tel} and (\ref{df}),  $\mathbb F^-L=\mathbb F^+L$ in a neighbourhood of $e_0$
Actually, $\mathbb F^-L(a)=\mathbb F^+L(a)$ for all $a\in\Oc_1$, because they are quadratic (thus analytic) in $a$.
Hence, the discrete Lagrangian evolution operator in a neighbourhood $U$ of $e_0$ is
$$\zg_L=(\mathbb F^- L)^{-1}\circ \mathbb F^+ L=\Id_{U}
$$
and the local discrete Hamiltonian operator in a neighbourhood $V$ of $0$ in $\o_1^*$ reads
$$\tilde\zg_L=\mathbb F^+ L\circ (\mathbb F^- L)^{-1}=\Id_{V}\,.
$$
Note however that there are nontrivial solutions lying outside the neighbourhood of $e_0$. For instance,
$(a_n,a_{n+1})=\left((0,a'_n),(0,a'_{n+1})\right)$, where $a'_n$ and $a'_{n+1}$ represent arbitrary imaginary and unitary octonions. We can also take $A=\za^0_n\ne \za^0_{n+1}=B$, $|A|<|B|<1$, which are different solutions of (\ref{rec}). Note that in this case $B$ is not close to 1, as $|A|<1/\sqrt{2}$.
Then, for any $a_n=(A, a'_n)$, where $a'_n$ represents an imaginary octonion of length $\sqrt{1-A^2}$, the pair
$\left((A,a'_n),(B, u a'_n)\right)$, with $u^2=(1-B^2)/(1-A^2)$, is a solution of the Euler-Lagrange equation.

\section*{Acknowledgements}
J.~Grabowski acknowledges that his  research was funded by the  Polish National Science Centre grant HARMONIA under the contract number 2016/22/M/ST1/00542.

\section{Concluding remarks} We have shown how the discrete Lagrangian and Hamiltonian formalism on Lie groupoids can be extended to non-associative objects like smooth inverse loops. The working example was the inverse loop of unitary octonions $\Oc_1$ on which we defined a `linear' Lagrangian for which solutions of the discrete Euler-Lagrange equations are pairs of conjugate octonions. We also gave an example of a regular Lagrangian and discuss the corresponding Euler-Lagrange equations.

Since the discrete mechanics has its version on Lie groupoids (with fundamental for our paper works \cite{MMM,weinstein}), a natural question is to find the non-associative generalizations of Lie groupoids and to construct discrete mechanics on them. The first ideas of such objects, \emph{smooth loopoids}, can be found in \cite{Gr, Kin}. We will discuss this problem in a separate paper.


\small{\vskip1cm

\noindent Janusz GRABOWSKI\\ Institute of
Mathematics\\  Polish Academy of Sciences\\ \'Sniadeckich 8, 00-656 Warszawa, Poland
\\Email: jagrab@impan.pl \\

\noindent Zohreh RAVANPAK\\ Institute of
Mathematics\\  Polish Academy of Sciences\\ \'Sniadeckich 8, 00-656 Warszawa, Poland
\\Email: zravanpak@impan.pl \\

\end{document}